\def\endthebibliography{%
	\def\@noitemerr{\@latex@warning{Empty `thebibliography' environment}}%
	\endlist
}
\def\BibTeX{{\rm B\kern-.05em{\sc i\kern-.025em b}\kern-.08em
    T\kern-.1667em\lower.7ex\hbox{E}\kern-.125emX}}
\newcommand{\colb}{}
\newtheorem{thm}{Theorem}
\newtheorem{prop}[thm]{Proposition}
\newtheorem{rem}{Remark}
\begin{document}

\title{A queueing approach to the latency of decoupled UL/DL with flexible TDD and asymmetric services}

\author{
		\IEEEauthorblockN{}
\author{Beatriz~Soret, Petar Popovski, Kristoffer Stern\\
Connectivity Section, Department of Electronic Systems, Aalborg University\\
bsa@es.aau.dk}
	}

\maketitle
\begin{abstract}
One of the main novelties in 5G is the flexible Time Division Duplex (TDD) frame, which allows adaptation to the latency requirements. However,  this flexibility is not sufficient to support heterogeneous latency requirements, in which different traffic instances have different switching requirements between Uplink (UL) and Downlink (DL). This is visible in a traffic mix of  enhanced mobile broadband (eMBB) and ultra-reliable low-latency communications (URLLC). In this paper we address this problem through the use of a decoupled UL/DL access, where the UL and the DL of a device are not necessarily served by the same base station. The latency gain over coupled access is quantified in the form of queueing sojourn time in a Rayleigh channel, as well as an  upper bound for critical traffic.
\end{abstract} 
\begin{IEEEkeywords}
Two-way communication, decoupled uplink/downlink, latency, URLLC, interactive, flexible TDD
\end{IEEEkeywords}
\section{Introduction}
The main enhancement of 5G Time Division Duplex (TDD) as compared to 4G is the flexibility in the assignment of the two directions, uplink (UL) and downlink (DL), allowing for a very agile adaptation to the instantaneous traffic variations. Another remarkable feature of 5G \colb{New Radio (NR)} is the support of three generic services with vastly heterogeneous requirements (e.g., in the packet sizes): enhanced mobile broadband (eMBB), massive machine-type communications (mMTC) and ultra-reliable low-latency communications (URLLC). The latter will enable the real time interactive applications with two-way traffic, envisioned with the emergence of tactile Internet. 

It is well known from queueing theory that waiting in a single line with two available servers is on average better than waiting in separated lines with one server each \cite{Kleinrock1975}. The intuition behind is that tasks with a long task in front of the queue shall wait for a long time if only one server is available, whereas having a second server reduces the blocking situations. Translating this principle to a \colb{TDD cellular system, where the UL and DL transmission cannot occur simultaneously, we can interpret that the UL and DL transmissions are waiting in the same queue and the transmission direction (UL/DL) of the wireless link adapts to the direction (UL/DL) of the queued packets. In this paper,} we study the gain of decoupling the UL and the DL directions under heterogeneous \colb{Time Transmission Interval} (TTI) requirements. 

\colb{The idea of decoupling the access \cite{Boccardi2016} arose in the context of Heterogeneous Networks (HetNets) with the goal of alleviating  the UL-DL asymmetry and improving the average throughput. }Being the focus on the user association and the interference, the related literature has mostly used stochastic geometry for the analysis \cite{Smiljkovikj2015}. 
Several works have looked at the interplay of time slot length versus the switching cost in 4G TDD (see e.g., \cite{ElBamby2015} \cite{Kerttula2016}). \colb{The switching time is related to the difference in distances and propagation delays among devices in the UL, and the need for a timing advance to account for such differences. The increased base station density results in smaller link distances. Thus, the switching time in 5G, especially indoors, is reduced and cannot be seen as a bottleneck anymore.} 

Another relevant research question in 4G networks has been the possibility of having a link to more than one transmission point. \colb{Coordinated Multi-Point (CoMP) transmission was introduced in 4G to allow a device to simultaneously transmit and receive data on multiple transmission points (TPs) \cite{Qamar2017}. One of the CoMP techniques, Transmission Point Selection (TPS), entails the device being dynamically scheduled by the most appropriate TP. Besides in-band CoMP, 5G NR introduces the possibility of \textit{multi-connectivity} across bands. While CoMP is typically used to improve the throughput, multi-connectivity mostly serves to improve the availability and the reliability \cite{Ohmann2016}. 
In any case, the studies have been typically limited to one of the two transmission directions, i.e., finding methods to optimize the DL or the UL. }For example, \cite{Fernandez2017} studies a DL centralized joint cell association and scheduling mechanism for eMBB traffic, based on dynamic cell switching by which users are not always served by the strongest perceived cell. 

\begin{figure}[t]
	\centering
	\includegraphics[width=0.4\textwidth]{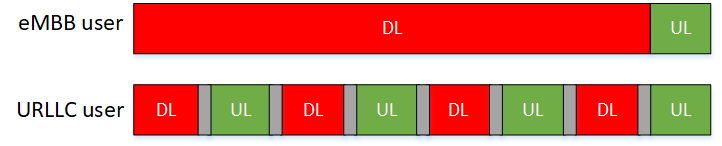}
	\caption{Traffic patterns of two-way traffic with different TTI requirements: eMBB device and URLLC interactive device. The interactive URLLC device is modeled with some processing time between transmission directions. }
	\label{fig:users}
\end{figure}
This letter proposes exploiting the extra diversity of the decoupled access to satisfy low latency requirements, and addresses UL and DL in a unified model. Each slot, of possibly different size in the frequency-time plane, can be assigned to either the UL or DL direction, depending on traffic load and received signal conditions. 
 We use a queueing model for the analysis. The reference example with heterogeneous TTI requirements is the mix of eMBB and interactive URLLC devices, \colb{as shown in Figure \ref{fig:users}}. The eMBB device requires long DL transmissions followed by short UL \colb{acknowledgement/negative acknowledgments} (ACKs/NACKs), whereas an interactive process has a stringent latency requirement and sends short UL/DL packets continuously, \colb{with a processing time between each UL and DL packet generation.}
\begin{figure}[t]
	\centering
	\includegraphics[width=0.4\textwidth]{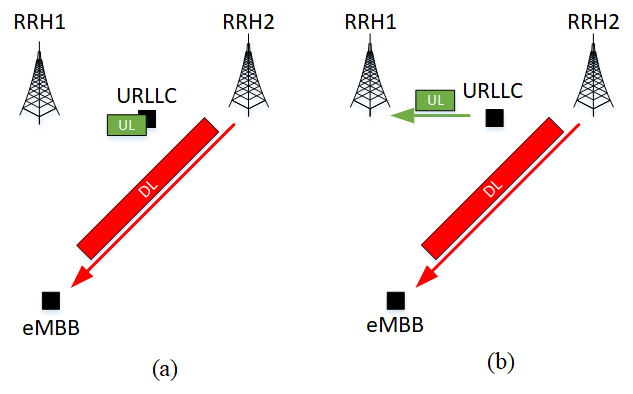}
	\caption{(a) Coupled access. The URLLC device receives UL and DL from RRH2. The UL packet has to wait until the long DL packet is transmitted. (b) Decoupled access. The URLLC device can receive UL and DL from different RRHs. The UL packet is transmitted to RRH1 while RRH2 transmits a long DL packet}
	\label{fig:decoupled}
\end{figure}

The rest of the letter is organized as follows. In Section \ref{sec:system_model} the system model is detailed. In Section \ref{sec:sojourn_time} the sojourn time is analyzed. Section \ref{sec:upper_bound} discusses an upper bound for a priority interactive URLLC user. Conclusions are in Section \ref{sec:conclusions}. 
	
	\begin{figure*}[t]
		\centering
		\includegraphics[width=1\textwidth]{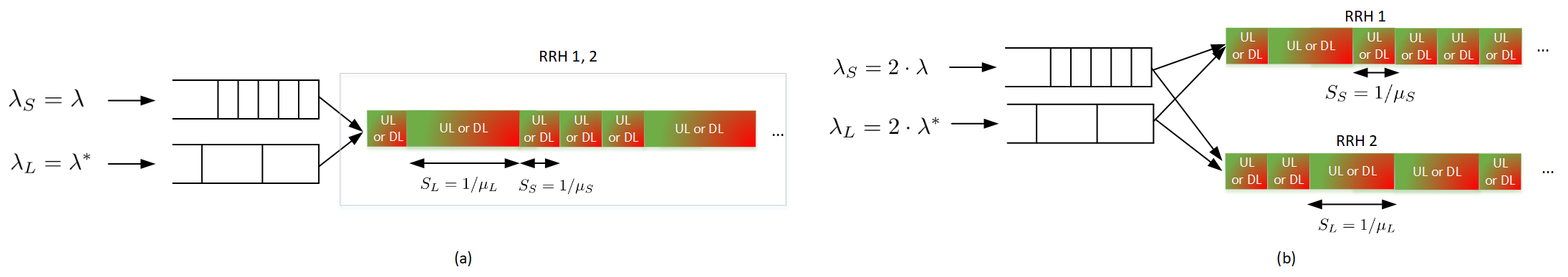}
		\caption{Queueing model with flexible TDD and devices with long and short TTI requirements. (a) Standard coupled access. Devices get the UL and DL from the same RRH. All the RRHs in the pool coordinate the transmission direction. (b) Decoupled access. Devices may receive the UL from one RRH and the DL from another one. The RRHs do not necessarily coordinate the transmission direction. 
		}
		\label{fig:queue_coupled_decoupled}
	\end{figure*}	


\section{System model} \label{sec:system_model}

We consider a TDD dense cell deployment with a central baseband pool connected with a fronthaul to a large number of small cells (Remote Radio Heads RRHs) where the Radio Frequency (RF) functionality is located. In a dense deployment each device is likely to receive a good or a fair signal quality from more than one cell. The small cells are not necessarily coordinated in their transmission directions, but opportunistically serve the traffic. The small cells serve two-way traffic from eMBB and URLLC devices like the ones in Figure \ref{fig:users}. The central unit can quickly decide which device to be served by each cell.

\colb{The benchmark is the standard coupled UL/DL in which devices are connected to only one RRH}. The base station must allocate long DL periods for the eMBB. Preemption can still be used for the small DL packets, but not for the UL, and therefore the overall latency requirement of the interactive user is challenged. The alternative is to allow decoupled UL/DL. Each device can connect to a maximum of two RRHs. Besides the primary cell $\mathcal{P}$ with the highest received power, the second one in power, denoted by $\mathcal{S}$, is reachable if it is at most $T$ dB below $\mathcal{P}$. The half-duplex devices can  be served from any of the two base stations, e.g., receive the DL from the first base station and the UL from the second one. 

The 5G NR frame has been designed with the premise of providing the necessary flexibility to support a heterogeneity of services and requirements. The main principle is that strict timing relations are avoided. On the same page, the TDD DL/UL scheme is much more flexible than LTE: a slot can contain all DL, all UL, or almost any other DL/UL ratio, and the pattern can be changed in each slot or subframe. 
The faster TDD turn-around and the self-contained concept, such that data and ACK can be scheduled in the same slot, are enablers for low-latency devices. In spite of these enhancements, we identify a limitation when the TTI requirements are asymmetric, due to the enforced transmission direction. \colb{For example, when a URLLC request arrives in the UL but the primary RRH is busy with a long eMBB transmission, the latency requirement can be met if the request is scheduled in the secondary RRH (see Figure \ref{fig:decoupled}). }

We assume a single spectrum with unit bandwidth. The instantaneous SNR is
\begin{equation}
    \gamma(t) = |h(t)|^2 \frac{E_s}{N_0}
\end{equation}

\noindent where $E_S$ is the average energy per symbol, $N_0$ the noise power spectral density, and $h(t)$ the complex channel envelope. Considering a block Rayleigh fading channel with Gaussian noise, the SNR has an  average of $\bar{\gamma}$ and it is distributed as:
\begin{equation}
    f_{\gamma} = \frac{1}{\bar{\gamma}}e^{-\gamma/\bar{\gamma}}
\end{equation}

\colb{Regarding the interference, the \textit{classical} UL-UL and DL-DL interference has been widely addressed in the context of 4G HetNets \cite{Soret2015} and later widened to 5G networks \cite{Soret2018}. Both device-based (e.g., interference cancellation receivers) and network-based (e.g., transmit power control) interference mitigation techniques are applicable to our scenario.  } Nevertheless, the lack of coordination in the transmission directions represents a major challenge  
in the form of inter-RRH and inter-device interference. A similar  problem was addressed in \cite{Popovski2015}, where the notion of interference spin was introduced for the optimization of the two-way scheduling in terms of the sum-rate. The framework can be adapted to be used in our scenario, using the latency as the \colb{Key Performance Indicator}.

With the focus on the queueing gains, we assume that the inter-RRH interference is ideally cancelled by sending the signal of the DL RRH to the UL RRH, such that it can be subtracted from the received signal. \colb{As per the inter-device interference, the challenging scenario is the reception of a DL signal from the RRH when a nearby device with line of sight (LOS) is transmitting in the UL. The situation is widely improved if there is total or partial signal obstruction between the two devices. This can be favoured by letting the scheduler prioritize the allocation of NLOS devices, as long as the latency requirements are fulfilled. } Alternatively, a parametric approach is also possible, where the average interference level is mapped to a transmission latency. This reduces the inference process to a parameter estimation problem. \colb{Another relevant consideration is that in large cell deployments, the difference in power between the DL signal and the UL signal is significant, but in small indoor cells they are of the same order. Therefore, the cross-interference UL-DL can be treated similarly as the UL-UL and DL-DL interference.}  

\section{Sojourn time} \label{sec:sojourn_time}
\colb{The analysis of the sojourn time is based on a multiclass M/G/s queue like the one in \mbox{Figure \ref{fig:queue_coupled_decoupled}}, where the traffic is separated in two queues for small and large packets}. The sub-indexes $S$ and $L$ refer to short and long TTIs, respectively. \colb{To exploit the flexible TDD, \colb{no queue is dedicated to a given transmission direction.} 
At each time instant, the transmission direction is imposed by the traffic: DL if the Head Of Line (HOL) packet is DL and UL otherwise. The server models the two-way wireless connection between RRH(s) and devices. Taking the reference traffic mix of Figure \ref{fig:users}, the short TTI queue is used by the interactive URLLC devices, whereas the eMBB devices store the long DL transmissions in the long TTI queue, and the UL ACKs/NACKs in the short TTI queue.} In the coupled UL/DL, see \mbox{Fig. \ref{fig:queue_coupled_decoupled} (a)}, each RRH is serving a short and a long TTI queue in both directions. The difference in the decoupled case is that the two queues have access to both RRHs. The queue is conservative: if the system is not empty, then the server is busy (or, in the decoupled case, at least one of the servers is busy). Moreover, there is no loss of work. For a fair comparison, the amount of traffic for the decoupled case is doubled.

\colb{Between queues, short packets have strict priority over long packets. The policy within each queue is First In First Out (FIFO). Due to the interactive nature of the URLLC traffic, modeled with a processing time between UL/DL packets, there is no need to have a special coordination between the packets scheduled in each RRH for the decoupled access. In other words, if a device has a UL packet in the queue, the consecutive DL packet will be queued only after the UL is received. This means that there is never a simultaneous transmission of a DL and UL packet from the same device. }

Being the arrival process of each queue Poisson-distributed, the total arrival process is also Poisson-distributed with rate:
\begin{equation}
    \lambda = \lambda_S + \lambda_L
\end{equation}

\noindent where the arrival rates are $\lambda_L$ and $\lambda_S$ for the long TTI and short TTI devices, respectively. 

Similarly, $\mu_L$ and $\mu_S$ denote the service rates. The inverses are the TTI duration (i.e., the service time), $S_S$ and $S_L$ ($S_S << S_L$). The time is discretized, and the minimum scheduling unit is given by the short, fixed TTI duration, $S_S$. 
Long eMBB transmissions use a discrete adaptation scheme with the range of received SNR divided into $M$ consecutive regions, each of which
is associated to a transmission rate  within the fading region $(\Gamma_{i-1}, \Gamma_i), i =
1, ..,M $. The better the channel quality, the higher the transmission rate. Thus, the service rate can take a value from a discrete set
\begin{equation}
\mu_{L}(t) =  \mu_i , \;\;\; \Gamma_{i-1} \leq \gamma(t) <  \Gamma_i , \; \; i = 1..M
\end{equation}

For Rayleigh channels, the probability of using the $i$th constellation is 
\colb{\begin{equation}
    p_i = \exp\left(-\frac{\Gamma_{i-1}}{\bar{\gamma}}\right) - \exp\left(-\frac{\Gamma_{i}}{\bar{\gamma}}\right)
\end{equation}}

The first and second moment of the service time are given by 

\begin{equation}
E[S_L] = \frac{1}{\mu_L} = \sum_{i=1}^M  \frac{p_i}{\mu_i} 
\end{equation} 
\begin{equation}
E[S^2_L] = \sum_{i=1}^M  \frac{p_i}{\mu^2_i} 
\end{equation}

The interactive URLLC devices do not have the possibility of using a closed loop and the transmission rate is fixed, i.e.
\begin{equation}
  E[S_{S}] = \frac{1}{\mu_S}, \;\;\; 
    E[S^2_S] = \frac{1}{\mu_S^2}
\end{equation}
To avoid saturation, the overall system utilization must be:
\begin{equation}
    \rho = \rho_L + \rho_S = \lambda_L E[S_L] + \lambda_S E[S_S]< 1 \label{eq:rho}
\end{equation}

Consider the $i$th data packet arriving to the system with a single server. The sojourn time comprises the queue waiting time, the frame alignment time and the transmission time. If it is a URLLC packet, it must wait in the queue for the residual time until the end of the current packet transmission plus the . If it is an eMBB packet, then it must also wait for the transmission of the URLLC packets arrived during its queueing time. 

\begin{prop}
The average sojourn time of the short and long TTI queues in the multiclass M/G/1 with priorities and discretized time is given by
\begin{equation}
\begin{aligned}
    E[T^{M/G/1}_S] = \frac{\lambda_L E[S_L^2] + \lambda_S E[S_S^2]}{2 (1-\rho_S)} &+ \frac{1}{\mu_S} + \frac{1}{2 \mu_S}, \\  
     E[T^{M/G/1}_L] = \frac{\lambda_L E[S_L^2] + \lambda_S E[S_S^2]}{2 (1-\rho) (1-\rho_S)} &+ \frac{1}{\mu_L} + \frac{1}{2 \mu_S} \label{eq:prop1}
\end{aligned}
\end{equation}

\noindent The average sojourn time of the short and long TTI queues in the multiclass M/G/2  with priorities and discretized time is approximated by
\begin{equation}
\begin{aligned}
    E[T^{M/G/2}_S] & \approx \frac{\lambda_L E[S_L^2] + \lambda_S E[S_S^2]}{(\lambda_L E[S_L] + \lambda_S E[S_S])^2} \cdot \frac{\rho^{\sqrt{6}-1}}{4\mu_L(1-\rho_S)}  \\ +  &\frac{1}{\mu_S} + \frac{1}{2 \mu_S}, \\ 
    E[T^{M/G/2}_L] & \approx \frac{\lambda_L E[S_L^2] + \lambda_S E[S_S^2]}{(\lambda_L E[S_L] + \lambda_S E[S_S])^2} \\ &\cdot \frac{\rho^{\sqrt{6}-1}}{4\mu_L(1-\rho)(1-\rho_S)} +  \frac{1}{\mu_L} + \frac{1}{2 \mu_S} \label{eq:prop1_2}
\end{aligned}
\end{equation}

\end{prop}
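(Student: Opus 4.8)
\emph{Proof sketch.} The short/long waiting times are those of a two-class non-preemptive priority M/G/1 queue, so the first part follows from the Cobham decomposition. First I would compute the mean residual work seen by an arriving packet, $R=\tfrac12\big(\lambda_S E[S_S^2]+\lambda_L E[S_L^2]\big)$, which by PASTA is common to both classes. A tagged short (high-priority) arrival waits only for $R$ plus the work of the short packets already queued ahead of it; Little's law rewrites the latter as $\rho_S W_S$, giving the fixed point $W_S=R+\rho_S W_S$, hence $W_S=R/(1-\rho_S)$. A tagged long (low-priority) arrival also waits for the long packets queued ahead of it and, crucially, for every short packet that arrives during its own wait and jumps the queue, which inflates the denominator to $(1-\rho_S)(1-\rho)$ and yields $W_L=R/\big[(1-\rho_S)(1-\rho)\big]$; these are exactly the leading fractions of \eqref{eq:prop1}. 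I would then add the transmission time, $E[S_S]=1/\mu_S$ respectively $E[S_L]=1/\mu_L$, and the frame-alignment delay: since the schedule is discretized in units of the short TTI $S_S$, a packet arriving uniformly within a slot waits on average $S_S/2=1/(2\mu_S)$ for the next boundary, which is the common $1/(2\mu_S)$ term in both lines.

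The M/G/2 case admits no exact closed form, so the plan is to bootstrap from the single-server result through a multi-server correction. I would take a standard G/G/s delay approximation of Sakasegawa type, whose mean queueing delay carries the factor $\rho^{\sqrt{2(s+1)}-1}$; specializing to $s=2$ reproduces the $\rho^{\sqrt{6}-1}$ of \eqref{eq:prop1_2}. Here $\rho$ is the per-server utilization, and because the comparison doubles the offered load when a second server is added, this per-server load still equals the single-server $\rho$ of \eqref{eq:rho}, so the exponent's argument needs no rescaling. The aggregate service process is the class mixture, with $\lambda E[S]=\lambda_L E[S_L]+\lambda_S E[S_S]$ and $\lambda E[S^2]=\lambda_L E[S_L^2]+\lambda_S E[S_S^2]$; its squared coefficient of variation $E[S^2]/E[S]^2$ is what turns the single-server residual $\tfrac12\lambda E[S^2]$ into a term proportional to the ratio $\big(\lambda_L E[S_L^2]+\lambda_S E[S_S^2]\big)/\big(\lambda_L E[S_L]+\lambda_S E[S_S]\big)^2$ appearing in the proposition. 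I would then leave the priority discipline untouched --- decoupling adds a server but does not reorder short ahead of long --- and reuse the M/G/1 class factors, dividing the two-server congestion prefactor by $(1-\rho_S)$ for the short class and by $(1-\rho)(1-\rho_S)$ for the long class. The transmission and alignment terms $1/\mu_S+1/(2\mu_S)$ and $1/\mu_L+1/(2\mu_S)$ are appended exactly as before.

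The main obstacle is precisely the step I just glossed. Grafting a multi-server approximation --- designed for a single aggregate class --- onto a priority split that is exact only for one server is a heuristic, and the prefactor is where the remaining freedom sits: Sakasegawa's delay scales as $1/(s\mu)=1/(2\mu)$ with the aggregate rate $\mu=1/E[S]$, whereas the proposition carries $1/(4\mu_L)$, so reconciling the factor of two and the replacement of the aggregate rate by the long-packet rate $\mu_L$ (the holding time that dominates a busy server) is the delicate bookkeeping. I would pin these constants down with limiting-case sanity checks --- a single class, $\rho_S\to 0$, and $\rho\to 0$ --- and then validate the resulting $E[T^{M/G/2}_S]$ and $E[T^{M/G/2}_L]$ against discrete-event simulation of the two-queue, two-server flexible-TDD system, since no exact identity is available to check the approximation against.
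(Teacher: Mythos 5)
Your proposal follows essentially the same route as the paper: the M/G/1 part is the Cobham/Pollaczek--Khinchine priority decomposition via PASTA and Little's law that the paper invokes, and the M/G/2 part combines a Sakasegawa/Kimura-type GI/G/s FCFS approximation (the paper cites Kimura's $\rho^{\sqrt{2(s+1)}-1}$ result) with the single-server priority ratios carried over to two servers, which is exactly the paper's use of Bondi's observation $E[W^{M/GI/s/prio}]/E[W^{M/GI/s/FCFS}] \approx E[W^{M/GI/1/prio}]/E[W^{M/GI/1/FCFS}]$. The ``delicate bookkeeping'' you flag --- reconciling the $1/(s\mu)=1/(2\mu)$ of the aggregate-rate formula with the proposition's $1/(4\mu_L)$ --- is a genuine loose end that the paper's own proof also leaves implicit, so your plan to pin the constants by limiting cases and simulation is as much as the paper itself provides.
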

\begin{proof}
The result for the M/G/1 is a generalization of the Pollaczek-Khinchine formula, by considering the multi-class case, the priority and non-priority classes, and using PASTA and Little's law. The last term, $\frac{1}{2\mu_S}$, accounts for the time discretization or the frame alignment, modeled as a uniform random variable $U$ in $[0, S_S]$. 

The result for the M/G/2 is a generalization of the approximated result in \cite{Kimura1986} for GI/G/s, 
\begin{equation}
   E[W^{M/G/s}] \approx \frac{ 1+C_s^2}{2} \frac{\rho^{\sqrt{2(s+1)}-1}}{s\mu(1-\rho)} 
\end{equation}
\noindent where $C_s^2$ is the coefficient of variation of the service process \colb{and $\mu$ is the average service rate}. Then, we use the observation in \cite{Bondi1984},
\begin{equation}
   \frac{E[W^{M/GI/s/prio}]}{E[W^{M/GI/s/FCFS}]} \approx \frac{E[W^{M/GI/1/prio}]}{E[W^{M/GI/1/FCFS}]}
\end{equation}
\noindent to consider the priority and non-priority classes. 

\colb{The ratios $\frac{E[W^{M/GI/1/prio}]}{E[W^{M/GI/1/FCFS}]}$ and $\frac{E[W^{M/GI/s/prio}]}{E[W^{M/GI/s/FCFS}]}$ reflects the effect on the sojourn time of converting from FCFS scheduling to priority scheduling with single and multiple ($s$) servers, respectively. The service rate with multiple servers is $s$ times the one with one server. Based on this observation, the intuition behind is that the ratio among waiting times remains the same in the single and multiple server systems because the relationships between service time and order of selection from the queues will be the same \cite{Bondi1984} .  }
\end{proof}

The numerical evaluation of Proposition 1 and the comparison with the simulations are shown in Figure \ref{fig:sims_TTI}. \colb{The short TTI is set to 1, and the long TTI takes the values 2, 10 or 15 depending on the channel quality (with the thresholds set to 0 and 10 dB). The total sojourn time is plotted versus the system utilization $\rho$. The arrival rates are are obtained from \mbox{equation (\ref{eq:rho})} and fixing $\lambda_L=4 \cdot \lambda_S$. } The devices with long TTI spend more time in the system, due to the longer service time and the low priority. Moreover, the decoupled access reduces the average time, and the improvement is remarkable as the intensity increases, corresponding to cases in which long tasks keep the server busy with a higher probability.

\begin{figure}[t]
	\centering
	\includegraphics[width=0.5\textwidth]{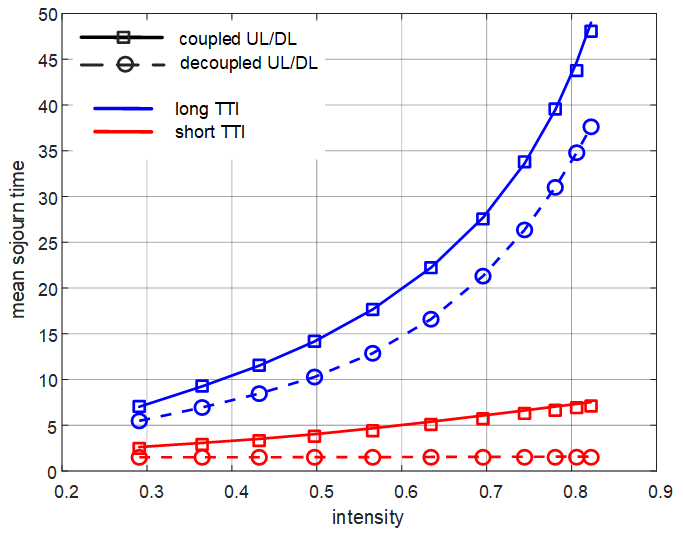}
	\caption{\colb{Comparison of the mean sojourn time with long and short TTI requirements versus the intensity $\rho$, using coupled and decoupled access. $S_S = 1$ (fixed), $S_L = 2, 10$ or $15$ (depending on the Rayleigh channel) and $\lambda_L = 4 \cdot \lambda_S$. The average sojourn time are obtained from (\ref{eq:prop1}) and (\ref{eq:prop1_2}). 
	Short packets have strict priority over long packets.}}
	\label{fig:sims_TTI}
	\end{figure}

\section{Upper bound of the cycle time: priority device} \label{sec:upper_bound}
We have studied the average gains for URLLC as a homogeneous service with FIFO policy among URLLC packets. Next, we give an upper bound of the latency distribution of the decoupled access by considering the two-way URLLC device with the highest priority. The processing time between transmission directions in the interactive traffic is $t_{proc}$, and the TTI length is $S_{S}$. In the background, there is broadband traffic with a maximum long TTI $S_{L}$ and no strict latency requirement. The interactive device has scheduling priority over any other device.

The two-way traffic is decomposed in UL-DL cycles, from the arrival instant of a UL packet to the reception of the following DL packet. Figure \ref{fig:cycle_time} shows an example of the round trip time with decoupled access. The processing time $t_{proc}$ and the transmission time $S_S$ add to the total cycle time. Moreover, both directions might find the RRHs busy, and the user has to wait the residual time $t_{res}$, i.e., the time til one of the RRHs is available. The cycle time is written

\begin{equation} \label{eq:t_cycle}
t_{cycle} = 2 \cdot (S_S  + t_{res}) + t_{proc}
\end{equation}

Assuming the same constant $S_S$ and $t_{proc}$ for the coupled and the decoupled access, the only randomness in equation (\ref{eq:t_cycle}) comes from the residual time. In our scenario, $t_{res}$ is confined to the interval $(0..S_{L}$). In the coupled scheme, the residual time in each base station follows a generic distribution $G$ between $0$ and $S_{L}$ (the longest possible TTI duration for eMBB traffic). We call this random variable $X_i \stackrel{}{\sim} G(0,S_{L})$, were $i$ is the RRH id. In the decoupled case, the residual time is the minimum between the residual times of the two RRHs,
\begin{equation}
Y  \stackrel{}{\sim}  \min(X_1, X_2)
\end{equation}

\begin{rem}
The CDF of the residual time in the decoupled access is given by the minimum between the residual times of the two RRHs, therefore

\begin{equation}
\begin{aligned}
F_Y(y) &= \text{Pr}\{Y \leq y\} =1-\left[1-F_{X_i}(y)\right]^2
\end{aligned}
\end{equation}

\end{rem}
\begin{rem}
Regardless of the distribution, the CDF of $Y$ is lower than the CDF of $X_i$, and therefore the latency of the decoupled access is always better than the coupled case. 
\end{rem}

\begin{figure}[t]
	\centering
	\includegraphics[width=0.5\textwidth]{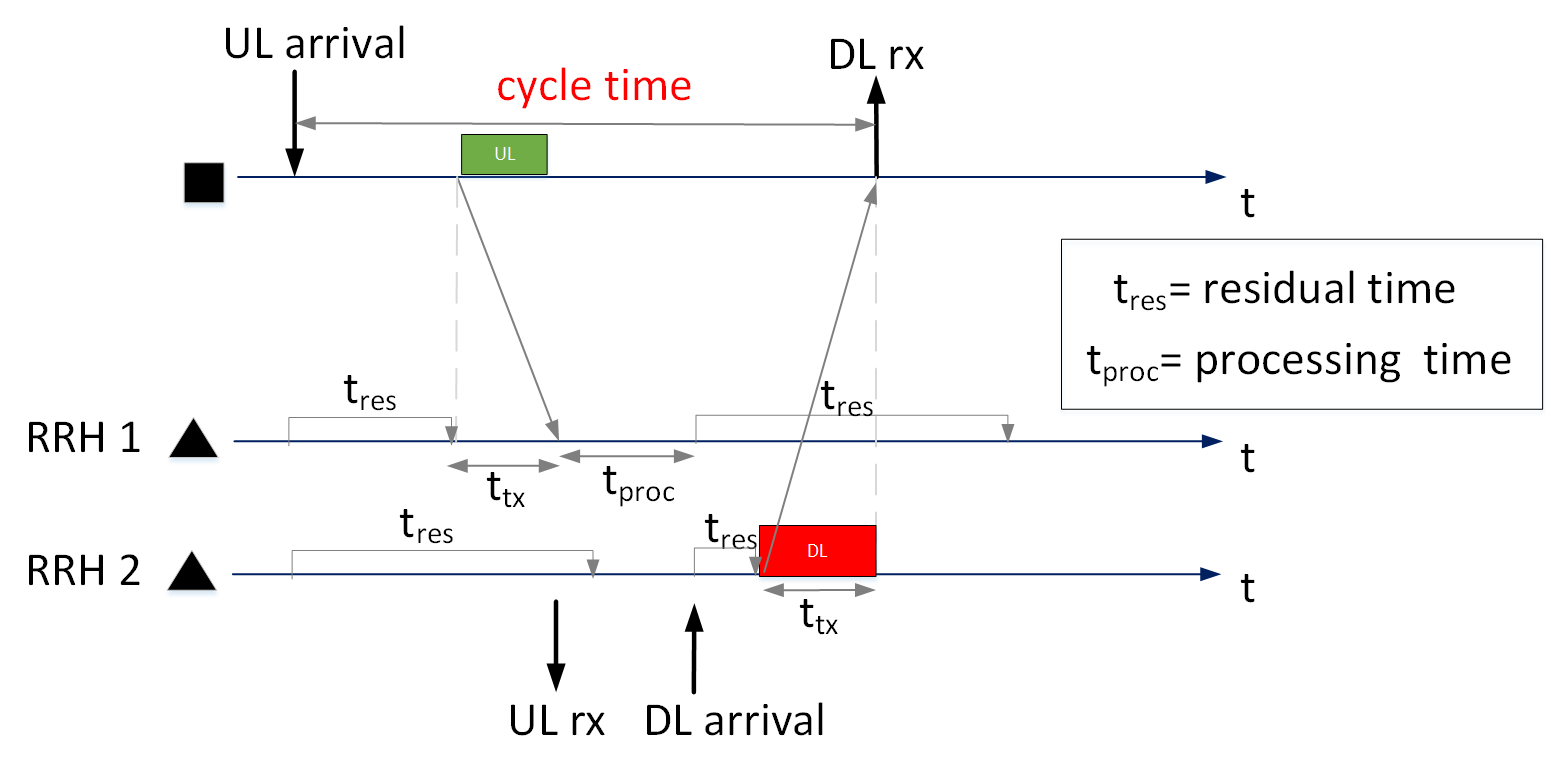}
	\caption{Sketch of the cycle time for a user with the highest priority using decoupled access. The RRH with the shortest residual time is selected for each transmission. }
	\label{fig:cycle_time}
\end{figure}

Figure \ref{fig:residual_time_exp} plots the CDF for the exemplary case of an exponential distribution, which corresponds to the residual time of an M/M/1 queue in the coupled access. In this case, the CDF yields
\begin{equation}
    F_Y(y) = 1-e^{2\lambda y} \;\;\; \forall 0\leq y \leq S_{L}
\end{equation} 

\section{Conclusions} \label{sec:conclusions}
We have investigated the latency gains of an interactive URLLC device when using flexible TDD and a decoupled UL/DL access. \colb{The critical URLLC traffic} is multiplexed with eMBB traffic, which usually requires much longer TTIs and adaptation to the instantaneous channel quality. The flexible TDD frame in 5G NR is the basis for the analysis. The problem is addressed from a queueing perspective, with the heterogeneous requirements and the Rayleigh channel variations captured in the model. The results show the latency improvements of the decoupled access, which are remarkable when the load increases. An upper bound of a priority user completes the analysis giving insight of the two-way round trip time. We have identified and quantified the potential of decoupling the two transmission directions, setting the basis for future work. Next steps include refining the model to include the impact of scheduling policies beyond FIFO. 

\begin{figure}[t]
	\centering
	\includegraphics[width=0.5\textwidth]{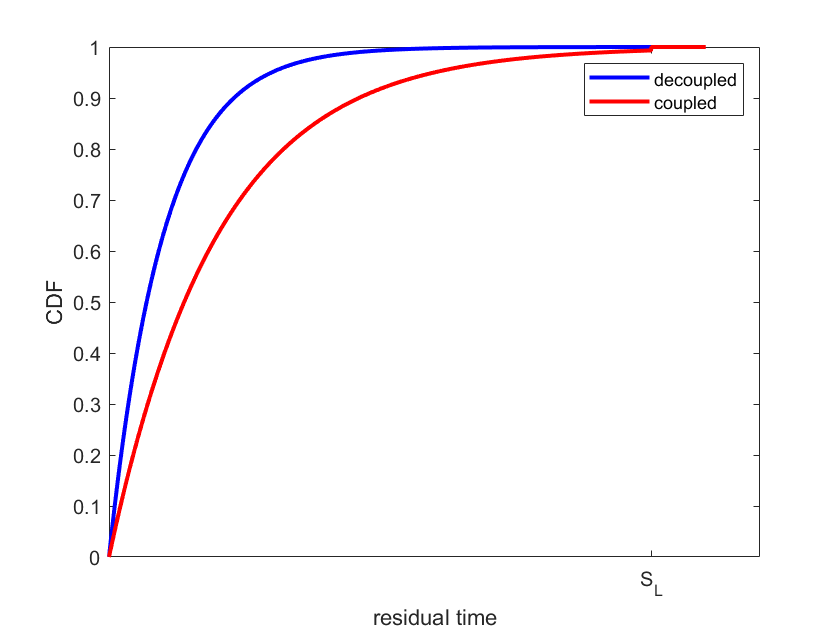}
	\caption{Decoupled latency gain for a critical URLLC user. The service time in each RRH is exponentially distributed and confined to $(0,S_L)$.}
	\label{fig:residual_time_exp}
\end{figure}

\section*{Acknowledgment}

This work has been in part supported by EU Horizon 2020 projects ONE5G (ICT-760809) and the European Research Council (ERC Consolidator Grant no.648382 WILLOW). The views expressed in this work are those of the authors and do not necessarily represent the ONE5G project view.

\bibliographystyle{IEEEtran}
\bibliography{QueueingAnalysisTDD}{}


\end{document}